\newtheorem{theorem}{Theorem}}
\newtheorem{proposition}{Proposition}}
\newtheorem{corollary}{Corollary}}
\newtheorem{remark}{Remark}}
\newcommand{\grond}[2]{\frac{\partial #1}{\partial #2}}
\newcommand\numberthis{\addtocounter{equation}{1}\tag{\theequation}}
\newcommand{\R}{\mathbb{R}}
\newcommand{\w}{\omega}
\newcommand{\pu}{\mathrm{pu}}
\begin{document}

\title{\LARGE \bf 
Nonlinear Analysis of an Improved Swing Equation}

\author{Pooya~Monshizadeh$^{1}$, Claudio De Persis$^{2}$, Nima Monshizadeh$^{2}$, and Arjan van der Schaft$^{1}$
\thanks{$^{1}$Pooya Monshizadeh and Arjan van der Schaft are with the Johann Bernoulli Institute for Mathematics and Computer Science, University of Groningen, 9700 AK, the Netherlands,
        {\tt\small p.monshizadeh@rug.nl, a.j.van.der.schaft@rug.nl}}%
\thanks{$^{2}$Claudio De Persis and Nima Monshizadeh are with the Engineering and Technology Institute, University of Groningen, The Netherlands, University of Groningen, 9747 AG, the Netherlands,
        {\tt\small n.monshizadeh@rug.nl, c.de.persis@rug.nl}}%
}

\maketitle

\begin{abstract}
In this paper, we investigate the properties of an improved swing equation model for synchronous generators. This model is derived by omitting the main simplifying assumption of the conventional swing equation, and requires a novel analysis for the stability and frequency regulation. We consider two scenarios. First we study the case that a synchronous generator is connected to a constant load. Second, we inspect the case of the single machine connected to an infinite bus. Simulations verify the results.
\end{abstract}

\section{Introduction}
Driven by environmental and technical motivations, restructuring the classical power networks has been under vast attention during the recent decades. Among the goals are decreasing energy losses by moving towards distributed generation and preventing fault propagation through building up smart \textit{microgrids}. Microgrids are small power network areas which can be seen as single entities from the large power grids. In such small-scale network, the energy consumption and production uncertainty increases to a great extent according to the fewer number of consumers and unpredictable power injections of renewables such as wind turbines and solar panels. Designing controllers for the electrical sources under such perturbations and abrupt power variations, calls for rethinking about the accuracy of the models that were (mostly) valid for the classical electrical systems.
\\
The main power sources in electrical networks are synchronous generators. Despite the extensive advances in extracting energy from renewables, these machines are still the main supplier of the implemented microgrids (see e.g. \cite{Shahidehpour2013}). Under unpredictable changes in loads and renewable generations, and to assure of the stability and frequency regulation of the (micro-) grid, it appears crucial to investigate the accuracy of the electrical and dynamical models of these machines.
\\
While a large number of articles have exploited the classical \textit{swing equation} as the model for the synchronous generator, a few have recently brought up doubts about its accuracy and validity, and proposed models of higher accuracy \cite{Zhou2009,Arjan2013,Caliskan2014,Tabuada2015,Weiss20141,Weiss20142,Tjerk2016}. Although some promising models are provided (see e.g. the exact 8-dimensional model in \cite{Arjan2013} and the integro-differential model in \cite{Weiss20141}), the use of such models in power networks is rather complicated.
\\
In this paper, we provide a nonlinear analysis for an improved yet easy-to-use swing equation to fulfill higher accuracy in analysis and modeling of the synchronous generator that can also be exploited in small-scale networks. We first explicate the contradicting assumptions in obtaining the conventional swing equation and elaborate deriving the improved model. In Section III, as the first scenario, we look into the properties of the model when the generator is connected to a constant load and provide an estimate of the region of attraction through nonlinear analysis. Further in Section IV, the second scenario which is mainly referred to as \textit{Single Machine Infinite Bus} (SMIB) is investigated. We provide analytical estimates of the region of attraction for both conventional and improved swing models in this case. Finally, simulations are provided as a verification of the results.
\section{Conventional and Improved Swing Equations}
The mechanical dynamics of the synchronous generator reads as 
\begin{equation}\label{swing2}
J \dot{\w }+D_d (\w-\w^*) =\tau_m-\tau_e  \; \text{,} 
\end{equation} 
where $J \in \R^+$ is the total moment of inertia of the turbine and generator rotor ($kg\,m^2$),  $\w \in \R^+$ is the rotor shaft velocity (mechanical $rad/s$), $\w^* \in \R^+$ is the angular velocity associated with the nominal frequency ($(2\pi)60\mathrm{Hz}$), $\tau_m \in \R^+$ is the net mechanical shaft torque ($N\, m$), $\tau_e \in \R^+$ is the counteracting electromagnetic torque ($N\, m$), and $D_d \in \R^+$ is the damping-torque coefficient ($N\,m\,s$) \footnote{Despite the general misuse, $D_d$ accounts for the damping torque generated by the damper windings (amortisseur) only when the generator is connected to an infinite bus \cite{Machowski2009}. In the case of a single generator connected to a load, the damping term refers to a proportional torque governor.}.
The mechanical rotational loss due to friction is ignored. Bearing in mind that $\tau_m=\frac{P_m}{\w}$ and $\tau_e=\frac{P_e}{\w}$ we can model the synchronous generator as
\begin{equation}\label{swing3}
J \w \dot{\w }+D_d \w (\w-\w^*) =P_m-P_e  \;  \text{,} 
\end{equation} 
where $P_m$ and $P_e$ are the mechanical (input) and electrical  (output) power respectively.
\\
In the conventional swing equation, $J\w$ and $D_d\w$ are approximated with constants $M=J\w^*$ and $A=D_d\w^*$.
\begin{equation}\label{swing1}
M\dot{\w }+A(\w-\w^*) =P_m-P_e  \; \text{,} 
\end{equation}
where $M \in \mathbb{R}$ is the angular momentum and $A\in \mathbb{R}$ is the (new) damping coefficient.
In other words, the swing equation is derived supposing that $\w=\w^*$. Such an assumption is in contradiction with the proof of stability and frequency regulation of the synchronous generator. Note that these machines posses high moment of inertia ($J$), and hence small deviations from the nominal frequency are magnified by the term $J\w$ in the dynamics. Therefore we adhere to the equation \eqref{swing3} and endeavor to investigate the stability and frequency regulation through a nonlinear approach.  This model has been first suggested by \cite{Zhou2009}, where however, the stability of the synchronous generator (connected to an infinite bus) is analyzed with the small-signal (linearization) analysis. Consistently with \cite{Zhou2009}, we refer to \eqref{swing3} as the \textit{improved swing equation}.
\section{Synchronous Generator \\Connected to a Constant Load} \label{SIM}
In this scenario, we assume that both the injected and extracted power ($P_m$ and $P_e$) are constant. 
\subsection{Stability}
Let $\bar\omega\in \R$ be an equilibrium of \eqref{swing3}. Then, clearly
\begin{equation}\label{equil}
D_d \bar{\w} (\bar{\w}-\w^*) =P_m-P_e  \; \text{.} 
\end{equation} 
The equation above admits a real solution only if
\begin{align}\label{delta}
P_m-P_e>-\frac{1}{4} D_d \w^{*2} \;,
\end{align}
which will be a standing assumption in this section.
Under this assumption, we obtain the following two equilibria for system \eqref{swing3}:
\begin{align}\label{equib}
& \bar{\w}_s=\frac{\w^*+\sqrt{\Delta}}{2} ,\quad
 \bar{\w}_u=\frac{\w^*-\sqrt{\Delta}}{2} 
\; \text{,} 
\end{align}
where 
\begin{equation}\label{e:Delta}
\Delta:=\w^{*2}+4\frac{P_m-P_e}{D_d}.
\end{equation}
Note that $\Delta>0$ by \eqref{delta}. Observe that
\begin{align}\label{roots}
\bar{\w}_s&>\bar{\w}_u \; \text{,}\\
\label{rootseq}
\bar{\w}_u&=\w^*-\bar{\w}_s \; \text{.}
\end{align}
First, we show that $\bar{\w}_s$ is locally stable and $\bar{\w}_u$ is locally unstable. 
For the moment, we assume that the set $\R^+$ is positive invariant for the system \eqref{swing3}, and $\omega(0)>0$. We will relax this assumption later, once a Lyapunov argument is provided. This means that $\omega^{-1}$ is well defined for the interval of the definition of the solutions. Now, let dynamics \eqref{swing3} be rewritten as 
\[
\dot\w=f(\w)
\]
where
\[
f(\w)=\frac{1}{J}\big(-D_d (\w-\w^*) +\frac{P_m-P_e}{\w} \big).
\]
We have
\begin{align*}
\grond{f}{\w}=\frac{1}{J}(-D_d -\frac{P_m-P_e}{\w^2}) \; \text{.} 
\end{align*}
By using \eqref{equil} and \eqref{rootseq}, we find that
\begin{align*}
\grond{f}{\w}&=\frac{1}{J}(-D_d -\frac{D_d \bar{\w}_s (\bar{\w}_s-\w^*)}{\w^2})
\\
&=\frac{D_d}{J}(\frac{\bar{\w}_s \bar{\w}_u}{\w^2}-1) \numberthis \label{J}.
\end{align*}
Bearing in mind inequality \eqref{roots}, it is easy to check that $\grond{f}{\w}> 0$ around $\w=\bar{\w}_u$, and $\grond{f}{\w}< 0$ around $\w=\bar{\w}_s$. Hence $\w=\bar{\w}_u$ is repulsive. Now, the following theorem addresses the stability of the equilibrium $\bar\w$ in \eqref{equib}.
\begin{figure}[t] 
\includegraphics[width=8.5cm]{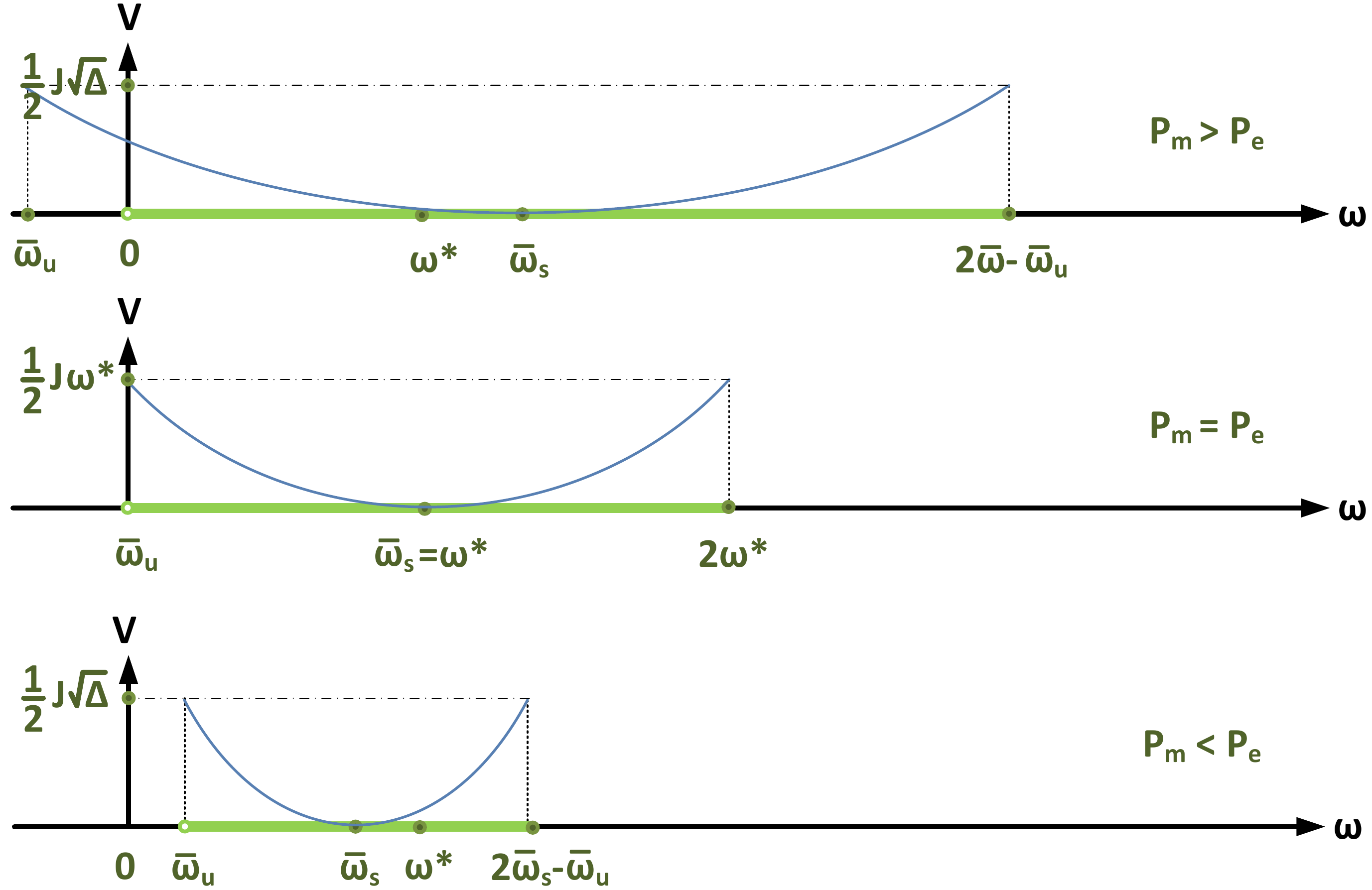} 
\caption{Region of attraction for system \eqref{swing3} is estimated by $\Omega_s$.} \label{Omega}
\end{figure}
\begin{theorem}\label{th:stability}
Consider the candidate Lyapunov function $$V(\w)=\frac{1}{2}J(\w-\bar{\w}_s)^2$$
with $\bar\omega_s$ given by \eqref{equib}. Let $\Delta$ be given by \eqref{e:Delta} and assume that \eqref{delta} holds. Then the solutions of the system \eqref{swing3} starting from any initial condition in the set $\Omega_s = \{\w \in \R^+:\,V(\w)\leq\frac{1}{2}J\Delta\}$ except for $\w(0)=\bar{\w}_u$ converge asymptotically to the equilibrium $\w=\bar\w_s$. 
\end{theorem}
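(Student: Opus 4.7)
The plan is a Lyapunov estimate on the sublevel set $\Omega_s$ closed out by LaSalle's invariance principle; the key algebraic step is a factorization of $\dot V$ that exhibits both equilibria. I would first compute $\dot V = J(\w-\bar\w_s)\dot\w$ along \eqref{swing3}. Substituting $J\dot\w = -D_d(\w-\w^*) + (P_m-P_e)/\w$ and replacing $P_m-P_e$ by $D_d\bar\w_s(\bar\w_s-\w^*)$ via \eqref{equil}, the resulting bracket factors as $(\w-\bar\w_s)(\w+\bar\w_s-\w^*)$; using $\w^*-\bar\w_s = \bar\w_u$ from \eqref{rootseq} I expect to arrive at
\begin{equation*}
\dot V(\w) \;=\; -\frac{D_d}{\w}\,(\w-\bar\w_s)^2\,(\w-\bar\w_u).
\end{equation*}
Thus $\dot V\le 0$ on $\{\w>0,\ \w\ge\bar\w_u\}$, with equality only at $\w\in\{\bar\w_s,\bar\w_u\}$.

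Next I would identify $\Omega_s$ geometrically: since $\{V\le\tfrac{1}{2}J\Delta\}$ is the interval $[\bar\w_s-\sqrt{\Delta},\,\bar\w_s+\sqrt{\Delta}]$ whose left endpoint equals $\bar\w_u$ by \eqref{equib}, every $\w\in\Omega_s$ satisfies $\w>0$ and $\w\ge\bar\w_u$, so $\dot V\le 0$ on $\Omega_s$ and monotonicity of $V$ gives forward invariance. To rule out escape to $\w=0$ I would use that $V=\tfrac{1}{2}J(\w-\bar\w_s)^2$ is a perfect square centered at $\bar\w_s>0$: the bound $|\w(t)-\bar\w_s|\le|\w(0)-\bar\w_s|$ combined with uniqueness of solutions at the equilibrium $\bar\w_s$ (which prevents trajectories from crossing $\bar\w_s$) yields $\w(t)\ge\min\{\w(0),\bar\w_s\}>0$ for all $t$. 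This simultaneously justifies the standing positive-invariance assumption on $\R^+$ and secures global existence of the solution.

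Finally, LaSalle's invariance principle forces $\w(t)$ into the largest invariant subset of $\{\dot V=0\}\cap\Omega_s=\{\bar\w_s,\bar\w_u\}$. To exclude the spurious limit $\bar\w_u$ I would argue that whenever $\w(0)\ne\bar\w_u$, one has $V(\w(t))<\tfrac{1}{2}J\Delta=V(\bar\w_u)$ for all $t>0$: this is clear by strict decrease of $V$ off equilibria if $V(\w(0))<\tfrac{1}{2}J\Delta$, and in the remaining boundary case $\w(0)=\bar\w_s+\sqrt{\Delta}$ it follows from the strict inequality $\dot V(\bar\w_s+\sqrt{\Delta})<0$. Hence $\w(t)$ stays uniformly bounded away from $\bar\w_u$ and must converge to $\bar\w_s$. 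The main subtlety I anticipate is the factorization itself: the sign of $\dot V$ is not evident until one notices that \eqref{rootseq} forces the linear factor to be exactly $\w-\bar\w_u$, which in turn pins down the largest sublevel set on which this Lyapunov function has a definite sign, making the estimate $\Omega_s$ essentially sharp for this choice of $V$.
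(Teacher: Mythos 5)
Your proposal is correct and follows essentially the same route as the paper's proof: the same factorization $\dot V = -D_d\,\w^{-1}(\w-\bar{\w}_s)^2(\w-\bar{\w}_u)$, the same identification of $\Omega_s$ with the interval $[\bar{\w}_u,\,2\bar{\w}_s-\bar{\w}_u]$, and the same closing LaSalle argument. If anything, you are slightly more careful than the paper on two points it glosses over---excluding convergence to the boundary equilibrium $\bar{\w}_u$ by noting that $V$ drops strictly below the level $\tfrac{1}{2}J\Delta=V(\bar{\w}_u)$, and bounding $\w$ away from zero using that scalar trajectories cannot cross the equilibrium $\bar{\w}_s$---so no changes are needed.
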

\begin{proof}
We have
\begin{align*}
\dot{V}&=\w^{-1}(\w-\bar{\w}_s)\Big(P_m-P_e-D_d\w(\w-\w^*)\Big).
\end{align*}
The fact that $\w$ stays away from zero will be made clear later.
According to \eqref{equil} this leads to
\begin{align*}
\dot{V}&=\w^{-1}(\w-\bar{\w}_s)\Big(D_d \bar{\w}_s (\bar{\w}_s-\w^*)-D_d\w(\w-\w^*)\Big)
\\
&=-\w^{-1}D_d(\w-\bar{\w}_s)\Big(\w^2-\w\w^*+\bar{\w}_s\w^*-\bar{\w}_s^2\Big)
\\
&=-D_d\w^{-1}(\w-\bar{\w}_s)\Big((\w-\bar{\w}_s)(\w+\bar{\w}_s-\w^*)\Big)
\\
&=-D_d(\w-\bar{\w}_s)^2(1-\frac{\bar{\w}_u}{\w})\;\text{,} 
\end{align*}
where we have used $\bar{\w}_u=\w^*-\bar{\w}_s$. Hence $\dot{V}$ is negative semi-definite on the set  $\Omega'_s = \{\w \in \R^+:\,\w\geq\bar{\w}_u\}$. Below, we show that $\Omega_s \subseteq \Omega'_s$.
\\
For any $\w\in\Omega_s$ we have
\begin{align*}
& \frac{1}{2}J(\w-\bar{\w}_s)^2\leq\frac{1}{2}J(\w^{*2}+4\frac{P_m-P_e}{D_d})
\end{align*}
which yields
\[
(\w-\bar{\w}_s)^2\leq(2\bar{\w}_s-\w^*)^2
\]
by using \eqref{equib}. By \eqref{delta} we have $2\bar{\w}_s-\w^*\geq0$. Therefore
\begin{align*}
 \w^*-2\bar{\w}_s\leq\w-\bar{\w}_s\leq2\bar{\w}_s-\w^*
\end{align*}
and hence
\begin{align}
 \bar{\w}_u\leq\w\leq2\bar{\w}_s-\bar{\w}_u \label{bounded}
 \; \text{.}
\end{align}
The left hand side of \eqref{bounded} gives $\Omega_s \subseteq \Omega'_s$. 
\\
Now we show that $\w$ remains in the positive half-line $\R^+$. By \eqref{equib} and \eqref{e:Delta}, in the case $P_m<P_e$, we have $\bar{\w}_u>0$ and together with \eqref{bounded} we obtain $\w>0$. Therefore in this case, the point $\w=0$ does not belong to the interval specified by \eqref{bounded}. In cases $P_m > P_e$ or $P_m = P_e$, the point $\w=0$ lies within the boundaries \eqref{bounded}. Figure \ref{Omega} illustrates these situations. Since $(\w-\bar{\w}_s)^2=2JV(\w)$, and $V(\w)$ is strictly decreasing in $\Omega_s$, $(\w-\bar{\w}_s)^2$ is decreasing as well, showing that $\w$ stays away from  $\w=0$ as time goes by. This proves that $\w(t)>0,\;\forall t>0$. 
\\
Notice that $\Omega_s$ is closed and bounded by \eqref{bounded}. 
By invoking LaSalle's invariance principle, the solutions of the system starting in $\Omega_s$ asymptotically converge to the set of points where 
\begin{align} \label{vdot}
-D_d(\w-\bar{\w}_s)^2(1-\frac{\bar{\w}_u}{\w})=0  \; \text{.}
\end{align}
Equality \eqref{vdot} results in $\w=\bar{\w}_s$, which completes the proof.
\end{proof}
\begin{remark}
Note that the Lyapunov function $V(\w)=\frac{1}{2}J(\w-\bar{\w}_s)^2$ is a shifted version of the actual kinetic energy of the physical system, whereas the commonly used Lyapunov function for the swing equation, $\frac{1}{2}M(\w-\bar{\w}_s)^2$ equivalent to $\frac{1}{2}J\w^\ast(\w-\bar{\w}_s)^2$, does not have a physical interpretation.
\end{remark}
\subsection{Considering Mechanical Losses}
We add the viscous damping coefficient $D_m$ accounting for the mechanical losses ($N\,m\,s$), modifying the dynamics \eqref{swing2} as
\begin{equation}\label{swing5}
J \dot{\w }+D_m \w+D_d (\w-\w^*) =\tau_m-\tau_e  \; \text{,} 
\end{equation} 
and similarly \eqref{swing3} is modified as
\begin{equation}\label{swing6}
J \w \dot{\w }+D_m \w^2+D_d \w (\w-\w^*) =P_m-P_e  \;  \text{,} 
\end{equation}
which can be rewritten as 
\begin{equation}\label{swing7}
J \w \dot{\w }+(D_m+D_d) \w (\w-\frac{D_d}{D_m+D_d}\w^*) =P_m-P_e  \;  \text{.} 
\end{equation} 
Defining $D:=D_m+D_d$ and $\tilde{\w}^*:=\frac{D_d}{D_m+D_d}\w^*$ we obtain
\begin{equation}\label{swing8}
J \w \dot{\w }+D \w (\w-\tilde{\w}^*) =P_m-P_e  \;  \text{,} 
\end{equation} 
which is analogous to the improved swing equation \eqref{swing3}. Therefore the stability results also extends to the case with mechanical losses.
\subsection{Incremental Passivity Property}
To facilitate the control design, in this section we investigate the incremental passivity of the system associated to \eqref{swing3} (see Remark \ref{ipremark} later on)
\begin{align} \label{ip}
\begin{aligned}
& J \w \dot{\w }+D_d \w (\w-\w^*) =u+P_m-P_e  
\\
& y=\frac{\w-\w^*}{\w}   \; \text{.}
\end{aligned}
\end{align}
Let the triple $(\bar u, \bar \w, \bar y)$ be an input-state-output solution of \eqref{ip} with
\begin{align} 
\begin{aligned}
& D_d \bar\w (\bar{\w}-\w^*) =\bar u+P_m-P_e  
\\
& y=\frac{\bar{\w}-\w^*}{\bar\w}   \; \text{.}
\end{aligned}
\end{align}
Assume that 
\begin{align}\label{asmp}
\bar{u}+P_m-P_e>-\frac{1}{4} D_d \w^{*2} \;.
\end{align}
Then the dynamics \eqref{ip} possesses the equilibria
\begin{align}\label{equib3}
\bar{\w}_s=\frac{\w^*+\sqrt{\Delta}}{2} , \quad \bar{\w}_u=\frac{\w^*-\sqrt{\Delta}}{2}
\end{align}
where with a little abuse of the notation
\begin{align}\label{e:Delta2}
\Delta:=\w^{*2}+4\frac{\bar{u}+P_m-P_e}{D_d}.
\end{align}
Now, we have the following proposition:
\begin{proposition}\label{l:ip}
Consider the candidate Lyapunov function 
$$
W(\w)=\frac{1}{2}J(\w-\bar{\w}_s)^2\frac{\w^*}{\bar{\w}_s}
$$ and assume that \eqref{asmp} holds. Then, $W(\w)$ computed along any solution to \eqref{ip} satisfies
\begin{align}\label{ipcondition}
\dot{W}<(y-\bar{y})(u-\bar{u})\;,
\end{align} 
as long as the solution stays within the set $\Omega_k = \{\w \in \R^+:\,W(\w)\leq\frac{1}{2}J\Delta\frac{\w^*}{\bar{\w}_s}\}$. 
This amounts to an incremental passivity property with respect to $(\bar u, \bar\w, \bar y)$.
\end{proposition}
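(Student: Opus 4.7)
The plan is to differentiate $W$ along trajectories of \eqref{ip} and massage the result into the form $(y-\bar y)(u-\bar u)$ plus a residual that is nonpositive on $\Omega_k$. The specific choice of the weight $\w^*/\bar\w_s$ in $W$ is what makes the output term appear exactly.

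First I would compute $\dot W = J(\w-\bar\w_s)\dot\w\,\w^*/\bar\w_s$, and substitute $J\w\dot\w = (u-\bar u) + (\bar u + P_m - P_e) - D_d\w(\w-\w^*)$. Using the equilibrium relation $\bar u + P_m - P_e = D_d\bar\w_s(\bar\w_s-\w^*)$, this yields
\begin{equation*}
\dot W = \frac{\w^*(\w-\bar\w_s)}{\bar\w_s\,\w}(u-\bar u) \;+\; \frac{\w^* D_d(\w-\bar\w_s)}{\bar\w_s\,\w}\Bigl[\bar\w_s(\bar\w_s-\w^*)-\w(\w-\w^*)\Bigr].
\end{equation*}
The key algebraic observation is that $y-\bar y = (\w-\w^*)/\w - (\bar\w_s-\w^*)/\bar\w_s = \w^*(\w-\bar\w_s)/(\bar\w_s\w)$, so the first term is precisely $(y-\bar y)(u-\bar u)$. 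This is where the scaling $\w^*/\bar\w_s$ in $W$ pays off.

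Next, I would reuse the factorization already established in the proof of Theorem \ref{th:stability}, namely $\bar\w_s(\bar\w_s-\w^*)-\w(\w-\w^*) = -(\w-\bar\w_s)(\w-\bar\w_u)$, which rests on $\bar\w_u = \w^*-\bar\w_s$. Substituting this into the second term gives
\begin{equation*}
\dot W = (y-\bar y)(u-\bar u) \;-\; \frac{\w^* D_d(\w-\bar\w_s)^2(\w-\bar\w_u)}{\bar\w_s\,\w}.
\end{equation*}

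Finally, I would show that the residual term is nonpositive on $\Omega_k$. The inequality $W(\w)\le \tfrac12 J\Delta\,\w^*/\bar\w_s$ translates to $(\w-\bar\w_s)^2\le \Delta$, hence $\w\ge \bar\w_s-\sqrt\Delta=\bar\w_u$, exactly as in Theorem \ref{th:stability}. Combined with $\w>0$ on $\R^+$ and $D_d,\w^*,\bar\w_s>0$, this forces the subtracted quantity to be nonnegative, yielding the claimed incremental passivity inequality. The only genuine step of work is the substitution/factorization in the middle; the main (modest) obstacle is recognising that the weight $\w^*/\bar\w_s$ is the unique one that makes $\nabla W \cdot (J\w)^{-1}$ coincide with $y-\bar y$ up to the factor $u-\bar u$, so that no cross terms remain.
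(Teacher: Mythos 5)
Your proposal is correct and follows essentially the same route as the paper: differentiate $W$, use the steady-state relation $\bar u + P_m - P_e = D_d\bar\w_s(\bar\w_s-\w^*)$, identify $y-\bar y = \w^*(\w-\bar\w_s)/(\bar\w_s\w)$, factor the residual via $\bar\w_u=\w^*-\bar\w_s$, and bound $\w\geq\bar\w_u$ on $\Omega_k$ exactly as in Theorem \ref{th:stability}. The only (shared) blemish is that the residual vanishes at $\w=\bar\w_s$ and at $\w=\bar\w_u$, so strictly speaking the dissipation inequality holds with $\leq$ rather than $<$; this is a looseness already present in the paper's statement, not a gap introduced by you.
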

\begin{proof}
We have
\begin{align*}
\dot{W}=&(\frac{1}{\w})(\frac{\w^*}{\bar{\w}_s})(\w-\bar{\w}_s)
\Big(u+P_m-P_e-D_d\w(\w-\w^*)
\\
&-\bar{u}-P_m+P_e+D_d \bar{\w}_s (\bar{\w}_s-\w^*)\Big)
\\
=&\Big(\frac{\w^*(\w-\bar{\w}_s)}{\w\bar{\w}_s}\Big)(u-\bar{u})
\\
&-\Big(\frac{\w^*(\w-\bar{\w}_s)}{\w\bar{\w}_s}\Big)\Big(D_d\w(\w-\w^*)-D_d \bar{\w}_s (\bar{\w}_s-\w^*)\Big)
\\
=&(\frac{\w-\w^*}{\w}-\frac{\bar{\w}_s-\w^*}{\bar{\w}_s})(u-\bar{u})
\\&-D_d\Big(\frac{\w^*(\w-\bar{\w}_s)}{\w\bar{\w}_s}\Big)\Big((\w-\bar{\w}_s)(\w+\bar{\w}_s-\w^*)\Big)
\\
=&(y-\bar{y})(u-\bar{u})
\\&-D_d(\w-\bar{\w}_s)^2(1-\frac{\bar{\w}_u}{\w})(\frac{\w^*}{\bar{\w}_s})\;\text{.}  \numberthis \label{wdot}
\end{align*}
Hence the inequality \eqref{ipcondition} holds as long as the solutions evolve in the set $\Omega'_k = \{\w \in \R^+:\,\w\geq\bar{\w}_u\}$. The fact that $\Omega_k \subseteq \Omega'_k$ follows analogously to the proof of Theorem \ref{th:stability}.
\end{proof}
\begin{remark}\label{ipremark}
Defining $y=\w^{-1}(\w-\w^*)$ as in \eqref{ip}, results in a dimensionless output. This is consistent with the inequality of incremental passivity property \eqref{ipcondition}, since $yu$ has the dimension of physical power. This does not hold for the conventional swing equation where $\w$ is taken as the output and thus $yu$ does not have a meaningful physical dimension. 
\end{remark}
\subsection{Frequency Regulation} \label{FR}
In this section we investigate the frequency regulation of the system by an integral controller. Motivated by Proposition \ref{l:ip}, we propose the controller for system \eqref{ip}
\begin{align} \label{i}
\begin{aligned}
& \dot{\xi}=y=\frac{\w-\w^*}{\w}
\\
& u=-\xi
\; \text{.} 
\end{aligned}
\end{align}
Note that for the purpose of frequency regulation, the solution (equilibrium) of interest for the system \eqref{ip}, \eqref{i}, is given by 
\begin{equation}\label{e:desired}
\bar u=-\bar \xi=-(P_m-P_e), \quad \bar\w=\omega^\ast, \quad \bar y=0,
\end{equation}
which trivially satisfies the inequality \eqref{asmp}. In addition, for the solution above,  the quantities in \eqref{equib3} and \eqref{e:Delta2} are computed as
\[
\bar{\w}_s=\w^*, \quad \bar{\w}_u=0, \quad \Delta=\w^{\ast 2}.
\]
Therefore, the dissipation equality \eqref{wdot} in this case reduces to
\begin{align}\label{e:dissipate}
\dot{W}=&(y-\bar{y})(u-\bar{u})-D_d(\w-\bar{\w}_s)^2,
\end{align}
as long as the solutions stay within the set $\Omega_k = \{\w \in \R^+:\,W(\w)\leq\frac{1}{2}J\w^{*2}\}$ which confines $\w$ within the interval $\w\in[0\;\,2\w^*]$. 
Now, the following theorem establishes the convergence of the solutions to the desired equilibrium associated with the nominal frequency regulation.
\begin{theorem}
Consider the candidate Lyapunov function $$U(\xi, \w)=W_c(\xi)+W(\w)$$ where $W_c=\frac{1}{2}(\xi-\bar{\xi})^2\;
$ and $W=\frac{1}{2}J(\w-\w^*)^2$. Then the solutions of the closed loop system \eqref{ip}, \eqref{i}, starting from any initial condition in the set $\mathcal{O} = \{(\xi,\w) \in \R^2:\,U\leq\frac{1}{2}J\w^{*2}\}$ and different from $(\xi,\w)=(P_m-P_e,0)$ converge asymptotically to the equilibrium 
$(\xi,\w)=(P_m-P_e,\w^*)$.
\end{theorem}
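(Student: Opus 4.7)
The plan is to leverage the incremental dissipation equality \eqref{e:dissipate} established in Proposition~\ref{l:ip} for the specific equilibrium \eqref{e:desired}, combined with the controller \eqref{i}, and then apply LaSalle's invariance principle on the compact sublevel set $\mathcal{O}$. First I would compute $\dot U = \dot W_c + \dot W$ along the closed-loop trajectories. For the controller state, $\dot W_c = (\xi-\bar\xi)\dot\xi = (\xi-\bar\xi)\,y$, and using $\bar y = 0$, $\bar u = -\bar\xi$, and $u = -\xi$ in \eqref{e:dissipate}, the cross term $(y-\bar y)(u-\bar u) = y\bigl(-(\xi-\bar\xi)\bigr)$ cancels exactly with $\dot W_c$. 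This yields the clean expression
\[
\dot U = -D_d(\w-\w^*)^2 \leq 0,
\]
so $\mathcal{O}$ is positively invariant and $\w$ is confined to $[0,\,2\w^*]$ along solutions.

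Next I would show that $\w(t)$ stays strictly positive for trajectories initialized in $\mathcal{O}\setminus\{(P_m-P_e,0)\}$, so that the vector field \eqref{ip} (which has $\w^{-1}$ implicitly through division by $J\w$) is well-defined along them. The key observation is that the only point of $\mathcal{O}$ with $\w=0$ is $(P_m-P_e,0)$: indeed $U(\xi,0) \leq \frac{1}{2}J\w^{*2}$ forces $(\xi-\bar\xi)^2 \leq 0$. Therefore any non-excluded initial condition has $\w(0)>0$. Forward completeness with $\w(t)>0$ then follows because, if $\w(t^*)=0$ at some first time $t^*>0$, continuity gives $U(t^*) \geq \frac{1}{2}J\w^{*2} \geq U(0) \geq U(t^*)$, forcing $U$ to be constant on $[0,t^*]$; but $\dot U = -D_d(\w-\w^*)^2$ then imposes $\w\equiv\w^*$ on $[0,t^*]$, contradicting $\w(t^*)=0$.

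With forward completeness and boundedness in hand, I would invoke LaSalle's invariance principle on the compact set $\mathcal{O}\setminus\{(P_m-P_e,0)\}$ (equivalently, on the smaller sublevel set $\{U\leq U(0)\}$ in the interior whenever the initial condition is not on the boundary). The set $\{\dot U = 0\}\cap\mathcal{O}$ equals $\{(\xi,\w^*): \xi\in\R\}\cap\mathcal{O}$. On this set, invariance requires $\dot\w = 0$, so the dynamics \eqref{ip} evaluated at $\w=\w^*$ gives $0 = u + P_m - P_e = -\xi + P_m - P_e$, hence $\xi = P_m - P_e = \bar\xi$. Thus the largest invariant subset is the singleton $\{(\bar\xi,\w^*)\}$, yielding the claimed asymptotic convergence.

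The main obstacle is the second step: handling the singularity at $\w = 0$ and justifying that the compact set $\mathcal{O}$ used in LaSalle is genuinely positively invariant \emph{and} keeps $\w$ bounded away from $0$ along each non-excluded trajectory. The argument above closes this gap by exploiting strict decrease of $U$ whenever $\w\neq\w^*$; the rest of the proof is a straightforward application of LaSalle once the dissipation equality collapses to $\dot U = -D_d(\w-\w^*)^2$.
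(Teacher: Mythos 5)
Your proposal is correct and follows essentially the same route as the paper: the cross term $(y-\bar y)(u-\bar u)$ cancels $\dot W_c$ to give $\dot U=-D_d(\w-\w^*)^2$, forward invariance of $\mathcal{O}$ follows, and LaSalle pins down the singleton $\{(\bar\xi,\w^*)\}$. Your handling of the singularity at $\w=0$ (noting that the only point of $\mathcal{O}$ on $\{\w=0\}$ is the excluded one, then deriving a contradiction from monotonicity of $U$) is in fact more rigorous than the paper's brief limit argument about $\dot\w$ as $(\xi,\w)\to(P_m-P_e,0^+)$, but it serves the same purpose within the same overall proof.
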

\begin{proof}
First note that
\[ 
\dot{W}_c=\w^{-1}(\w-\w^*)(\xi-\bar{\xi})
\]
Hence, by \eqref{e:dissipate} we find that
\begin{align}\label{e:dotU}
\dot{U}=&-D_d(\w-\w^*)^2.
\end{align}
as long as the solutions belong to the set $\mathcal{O}$. As the right hand side of the above equality is nonpositive, we conclude that the set $\mathcal{O}$ is forward invariant along the solution $( \xi,\w)$, and thus \eqref{e:dotU} is valid for all time. 
Now, observe that $U$ is radially unbounded and has a strict minimum at $(\xi,\w)=(P_m-P_e,\w^*)$. 
Then by invoking LaSalle's invariance principle, the solutions converge to the largest invariant subset of $\mathcal{O}$ in which 
\begin{align*}
\w&=\w^\ast\\
\bar{u}&=P_e-P_m \;\text{.}
\end{align*}
The latter shows that all the solutions on the invariant set $\mathcal{O}$, and not starting from $(\xi,\w)=(P_m-P_e,0)$ converge asymptotically to the equilibrium $(\xi,\w)=(P_m-P_e,\w^*)$. Note that the solutions stay away from the point $(\xi,\w)=(P_m-P_e,0)$, since as $(\xi,\w)\rightarrow(P_m-P_e,0^+)$, we have $\dot{\w}\rightarrow+\frac{D_d}{J}\w^*$. We can rewrite the set  $\mathcal{O}$ as
\begin{equation}\label{e:oval}
\big(\frac{\w-\w^*}{\w^*}\big)^2+\big(\frac{\xi-\bar{\xi}}{\sqrt{J}\w^*}\big)^2\leq1.
\end{equation}
Such a set guarantees $\w$ to remain in the positive half-line $\R^+$ for $\w$ (See figure \ref{O}).
\end{proof}
\begin{figure}[t]
	\centering
	\includegraphics[width=3cm]{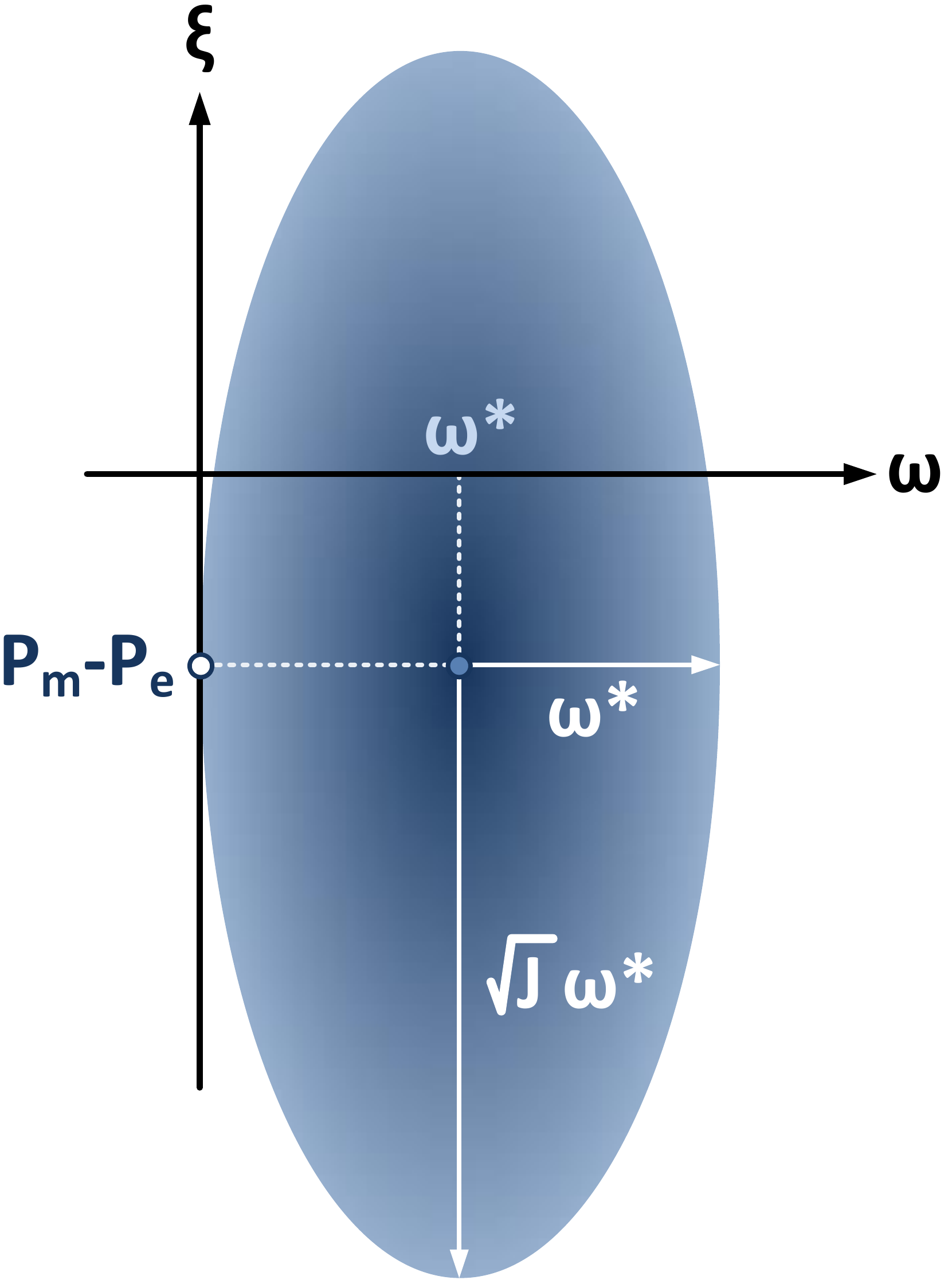} 
	\caption{Region of attraction for the system \eqref{ip}, \eqref{i} is estimated with the oval characterized by \eqref{e:oval}. } \label{O}
\end{figure}
\section{Synchronous Generator \\Connected to an Infinite Bus}
In this scenario we consider the case of SMIB, where the synchronous generator is connected to an infinite bus through an inductive link. The infinite bus is a node with fixed voltage and frequency (60Hz). 
\subsection{Stability of the Improved Swing Model}
It can be shown that the power delivered from the synchronous generator to the bus is $\gamma \sin(\delta)$ where $\delta$ is the voltage angle relative to the infinite bus, and $\gamma:=\frac{V_gV_b}{X}$. Here, the voltages ($V_g,V_b$) and the reactance of the line ($X$) are assumed to be constant. The dynamics of an SMIB system modeled with the improved swing equation \eqref{swing3} is
\begin{align}\label{swing4}
\begin{aligned}
& \dot{\delta}=\w-\w^*
\\
& J \w \dot{\w }+D_d \w (\w-\w^*) =P_m-\gamma \sin(\delta)  \;  \text{,} 
\end{aligned}
\end{align} 
where the mechanical input $P_m>0$ is considered constant. It is easy to check that $(\delta,\w)=(\mathrm{arcsin} \frac{P_m}{\gamma},\w^*)$ is the equilibrium of the system \eqref{swing4}.
\begin{theorem} \label{ThImproved}
Consider the candidate Lyapunov function $$V(\delta,\w)=V_k(\w)+V_p(\delta)\;,$$where $$V_k(\w)=\frac{1}{2}J(\w-\bar{\w}_s)^2$$and$$V_p(\delta)=\frac{\gamma}{\w^*}[-\cos\delta+\cos\bar{\delta}-(\delta-\bar{\delta})\sin\bar{\delta})]$$ are associated with the kinetic and potential energy and $\bar{\delta}=\mathrm{arcsin} \frac{P_m}{\gamma}$. Assume that 
\begin{align}\label{c:ib1}
\w^*>\sqrt{\frac{\gamma}{D_d}}
\end{align} 
and 
\begin{align}\label{c:ib2}
\frac{P_m}{\gamma}<\frac{2}{\pi}\;.
\end{align}
Let $c:=\min(c_k,c_p)$, where $c_k:=\frac{1}{2}J(\w^*-\frac{\gamma}{D_dw^*})^2$ and $c_p:=V_p(\frac{\pi}{2})$. Then the solutions of the system \eqref{swing4} starting from any initial condition in the set $\Omega = \{(\delta,\w) \in [-\pi,\pi]\times\R:\,V(\delta,\w)\leq c\}$ converge asymptotically to the equilibrium $(\delta,\w)=(\mathrm{arcsin} \frac{P_m}{\gamma},\w^*)$.
\end{theorem}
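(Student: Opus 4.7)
The plan is to treat $V = V_k + V_p$ as the natural (shifted) total energy for the SMIB system and to extract from it a dissipation inequality together with a forward-invariant sublevel set. First I would compute $\dot V$ along \eqref{swing4}. Using $\dot\delta = \w - \w^*$, $J\w\dot\w = P_m - \gamma\sin\delta - D_d\w(\w-\w^*)$ and $\sin\bar\delta = P_m/\gamma$, one obtains
\[
\dot V_k = \frac{\w-\w^*}{\w}\bigl(P_m - \gamma\sin\delta\bigr) - D_d(\w-\w^*)^2,\qquad \dot V_p = -\frac{\w-\w^*}{\w^*}\bigl(P_m - \gamma\sin\delta\bigr),
\]
so that the cross-terms combine via $1/\w - 1/\w^* = (\w^*-\w)/(\w\w^*)$ into
\[
\dot V = -(\w-\w^*)^2\!\left[D_d + \frac{P_m - \gamma\sin\delta}{\w\,\w^*}\right].
\]

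Next I would check the sign of the bracket. It is nonnegative provided $D_d\,\w\,\w^* \geq \gamma\sin\delta - P_m$, and since the right-hand side never exceeds $\gamma$, it suffices that $\w \geq \gamma/(D_d\w^*)$. Condition \eqref{c:ib1} is exactly what makes this threshold strictly smaller than $\w^*$, so that $c_k = \tfrac12 J(\w^* - \gamma/(D_d\w^*))^2$ is a positive number. On $\Omega$ one will have (see the next step) $V_p \geq 0$, hence $V_k \leq c \leq c_k$, and unpacking $V_k$ gives $|\w-\w^*| \leq \w^* - \gamma/(D_d\w^*)$, i.e.\ $\w \geq \gamma/(D_d\w^*)$, which both secures $\dot V \leq 0$ and keeps $\w$ bounded away from zero so that \eqref{swing4} is well defined.

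Third, I would settle the geometry of $\Omega$. On $[-\pi,\pi]$ the potential $V_p$ has a strict local minimum at $\bar\delta$, with $V_p(\bar\delta)=0$, and a strict local maximum at $\pi - \bar\delta$. A direct computation gives the identity $V_p(\pi-\bar\delta) = 2\,V_p(\pi/2)$, so the choice $c_p = V_p(\pi/2)$ places the cap strictly below the saddle value. Since \eqref{c:ib2} forces $\bar\delta < \arcsin(2/\pi) < \pi/2$, the sublevel set $\{V_p \leq c_p\}\cap[-\pi,\pi]$ is then a connected neighborhood of $\bar\delta$ that stays clear of the saddle, and on this neighborhood $V_p \geq 0$. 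Intersecting with the kinetic cap yields $\Omega$ as a bounded connected sublevel set on which both $V_p \geq 0$ and $V_k \leq c_k$, as was claimed above.

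Finally, with $\dot V \leq 0$ throughout $\Omega$ the set is forward invariant and LaSalle's invariance principle applies. The largest invariant subset of $\Omega \cap \{\dot V = 0\}$ is contained in $\{\w = \w^*\}$; invariance then forces $\dot\w = 0$, hence $\gamma\sin\delta = P_m$, whose only root in the $\delta$-component of $\Omega$ is $\delta = \bar\delta$ (the spurious root $\pi - \bar\delta$ is ruled out by the cap $c_p$). The step I expect to require the most care is the geometric one: tying the two scalar hypotheses \eqref{c:ib1} and \eqref{c:ib2} to a single connected sublevel set on which $V_p \geq 0$ (so that the dissipation estimate is honest) and $\w \geq \gamma/(D_d\w^*)$ (so that the bracket in $\dot V$ remains nonnegative) hold simultaneously, rather than only along the $\delta$-axis or only along the $\w$-axis.
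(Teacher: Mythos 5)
Your proposal is correct and, at its core, follows the same route as the paper: the same expression $\dot V=-(\w-\w^*)^2\big(D_d-\frac{\gamma}{\w\w^*}(\sin\delta-\sin\bar\delta)\big)$, the same observation that the bracket is positive once $\w\ge\gamma/(D_d\w^*)$, the same use of the kinetic cap $c_k$ together with \eqref{c:ib1} to secure that bound on $\Omega$, and the same LaSalle conclusion with $\pi-\bar\delta$ ruled out by the potential cap. The one place where you genuinely depart from the paper is the confinement of $\delta$. The paper rewrites $V_p(\delta)<c_p$ as the transcendental inequality \eqref{numerical}, $\cos\delta>(\frac{\pi}{2}-\delta)\sin\bar\delta$, and argues partly by inspection of a plotted example that its solution set is an interval $[\delta^-,\frac{\pi}{2}]$; it also asserts $\Omega\subset[-\frac{\pi}{2},\frac{\pi}{2}]$, a claim its own Figure (with $\delta^-=-0.78\pi$) contradicts, though nothing essential hinges on it since the convexity of $V_p$ is never actually used downstream. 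You instead read the level-set structure off the critical points of $V_p$: minimum at $\bar\delta$, saddle at $\pi-\bar\delta$, the exact identity $V_p(\pi-\bar\delta)=2V_p(\frac{\pi}{2})$ placing the cap $c_p$ strictly below the saddle value, and \eqref{c:ib2} entering precisely as the condition $V_p(\pi)-V_p(\frac{\pi}{2})=\frac{\gamma}{\w^*}(1-\frac{\pi}{2}\sin\bar\delta)>0$ that prevents a second component of $\{V_p\le c_p\}$ from appearing near $\delta=\pi$. This is cleaner and more self-contained than the paper's treatment, makes the role of \eqref{c:ib2} transparent, and explicitly supplies the fact the paper leaves implicit, namely that $V_p\ge 0$ on the relevant $\delta$-range so that $V\le c$ really does imply $V_k\le c_k$. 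If you write it up, spell out the one-line computation behind ``\eqref{c:ib2} gives connectedness,'' which you currently only assert.
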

\begin{proof}
Observe that $V_p(\delta)$ has a minimum in $\bar{\delta}$ and is convex within the set $\Omega_p=\{\delta \in [-\frac{\pi}{2},\frac{\pi}{2}]\} $. We here show that $\Omega \subset \Omega_p$. Having $V_k(\w)>0$, the inequality $V(\delta,\w)<c$ results in $V_p(\delta)<c_p$ and reads as
\begin{align} \label{numerical}
\cos\delta>(\frac{\pi}{2}-\delta)\sin\bar{\delta} \;.
\end{align}
The inequality \eqref{numerical} contains a unique subset of $\Omega_p$. More precisely, under the criterion $0<\bar{\delta}<\mathrm{arcsin}\frac{2}{\pi}$ as a result of \eqref{c:ib2}, there exists a $\delta^-$ s.t. \eqref{numerical} holds for all $\delta\in [\delta^- \; \frac{\pi}{2}]$. Figure \ref{fig:numerical} shows an example interval for $\delta$ that satisfies \eqref{numerical} and consequently the energy function $V_P$ remains convex. 
Now, it remains to prove that $\dot{V}<0$. we have
\begin{align*}
\dot{V}=&-D_d(\w-\w^*)^2-\frac{\gamma}{\w}[(\w-\w^*)(\sin\delta-\sin\bar{\delta})]
\\
&+\frac{\gamma}{\w^*}[(\w-\w^*)(\sin\delta-\sin\bar{\delta})]
\\
=&-D_d(\w-\w^*)^2+\frac{\gamma}{\w\w^*}(\w-\w^*)^2(\sin\delta-\sin\bar{\delta})
\\
=&-(\w-\w^*)^2\big(D_d-\frac{\gamma}{\w\w^*}(\sin\delta-\sin\bar{\delta})\big)\;.
\end{align*}
Bearing in mind that $\delta$ is confined s.t. $\delta-\bar{\delta}<\frac{\pi}{2}$, $\dot{V}<0$ on the set $\Omega_k = \{\w \in \R:\,\w>\frac{\gamma}{D_d\w^*}\}$. Here we show that $\Omega \subset \Omega_k$. In the set $\Omega$ we have $V(\delta,\w)<c$ which leads to $V_k(\w)<c_k$. Therefore
\begin{align*}
&(\w-\w^*)^2<(\w^*-\frac{\gamma}{D_dw^*})^2 \;,
\end{align*}
and since $\w^*>\sqrt{\frac{\gamma}{D_d}}$,
\begin{align*}
&\frac{\gamma}{D_dw^*}-\w^*<\w-\w^*<\w^*-\frac{\gamma}{D_dw^*} \;,
\end{align*}
thus 
\begin{align}\label{final}
&\frac{\gamma}{D_dw^*}<\w<2\w^*-\frac{\gamma}{D_dw^*} \;.
\end{align}
The left hand side of the inequality \eqref{final} shows that $\Omega \subset \Omega_k$. 
\\
Observe that $U$ has a strict minimum at $\w=\w^*$ and $\delta=\bar{\delta}$, and the solutions are bounded to $\frac{\gamma}{D_dw^*}<\w<2\w^*-\frac{\gamma}{D_dw^*}$ and $\delta^-<\delta<\frac{\pi}{2}$. By invoking LaSalle's invariance principle, the solutions converge to the largest invariant subset of $\Omega$ for \eqref{swing4} s.t. $\w=\w^*$. On this set, the solution to \eqref{swing4} satisfy
\begin{align*}
& \delta=\mathrm{arcsin}\frac{P_m}{\gamma} \;\text{.}
\end{align*}
This shows that all the solutions on the invariant set converge asymptotically to the equilibrium ($\mathrm{arcsin} \frac{P_m}{\gamma},\w^*$). This completes the proof.
\end{proof}
\begin{remark}
The assumption $\frac{P_m}{\gamma}<\frac{2}{\pi}$ results in $\bar{\delta}<\mathrm{arcsin}\frac{2}{\pi}$ which means that at steady state the angle between the synchronous generator and the infinite bus should be less than about $40^\circ$. In practice, the steady state angle is much lower even for large machines (see e.g. \cite{Zhao1995}). Note that this assumption is made merely for characterizing the region of attraction, otherwise it is not necessary for the proof of local stability since $V$ is convex around the equilibrium ($\frac{\partial^2 V}{\partial \delta^2}|_{\delta=\bar{\delta}}=\cos\bar{\delta}>0$).
\end{remark}
\begin{figure}
	\centering
	\includegraphics[width=6.6cm]{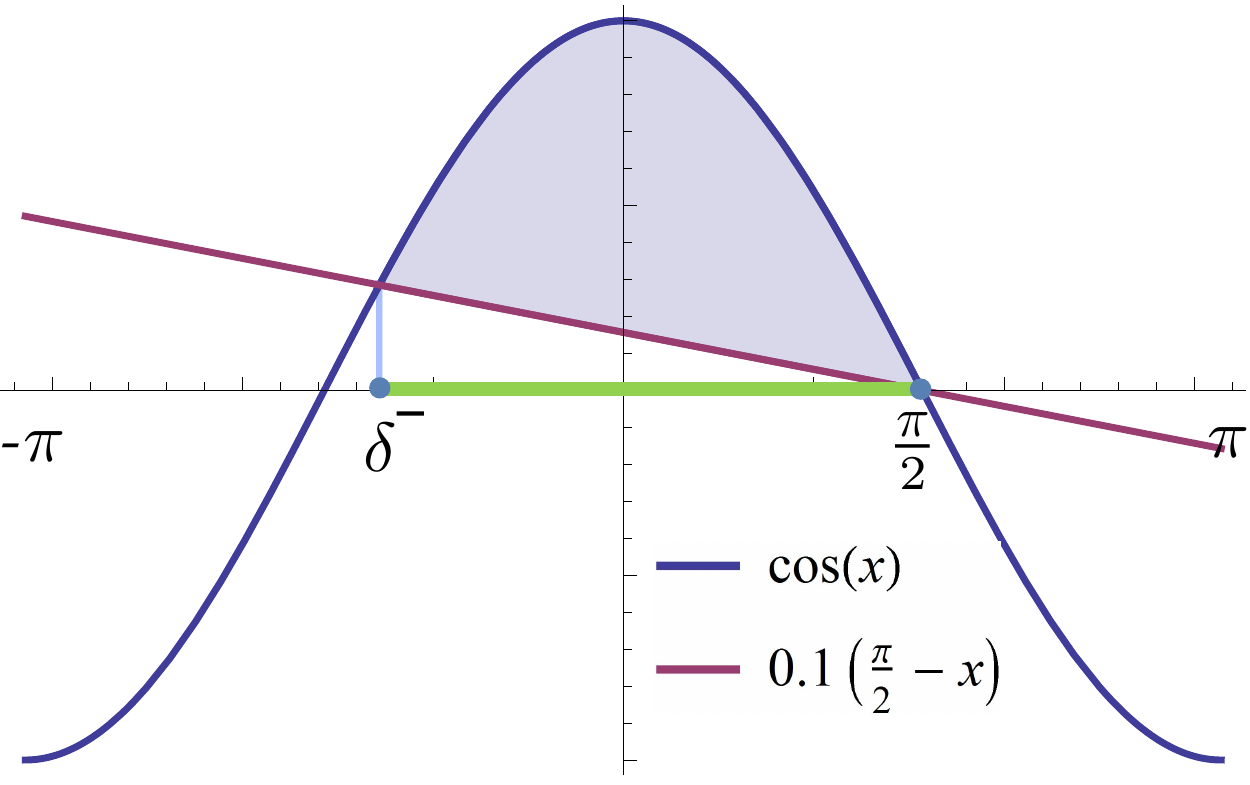}
	\caption{The estimate of the system \eqref{swing4} region of attraction ($V_p(\delta)<c_p$) for $\bar{\delta}=\mathrm{arcsin}0.1$ confines $\delta$ in $[-0.78\pi \;\; +0.5\pi]$ (green). On such an interval $V_p(\delta)$ remains positive and convex.}
	\label{fig:numerical}
\end{figure}
\subsection{Comparison with the Swing Equation}
Here we compare the results of the improved model with the swing equation. The dynamics of a synchronous generator modeled with the swing equation and connected to an infinite bus is
\begin{align} \label{swingib}
\begin{aligned}
	& \dot{\delta}=\w-\w^*
	\\
	& M \dot{\w }+A (\w-\w^*) =P_m-\gamma \sin(\delta)  \;  \text{,} 
\end{aligned}
\end{align} 
where the mechanical input $P_m>0$ is considered constant. It is straightforward to see that the equilibrium of the system \eqref{swingib} is $(\delta,\w)=(\mathrm{arcsin} \frac{P_m}{\gamma},\w^*)$.
\begin{corollary}\label{co:swing}
Consider the candidate Lyapunov function $$V(\delta,\w)=V_k(\w)+V_p(\delta)$$where $$V_k(\w)=\frac{1}{2}M(\w-\bar{\w}_s)^2$$and$$V_p(\delta)=\gamma[-\cos\delta+\cos\bar{\delta}-(\delta-\bar{\delta})\sin\bar{\delta})]\;.$$ Assume that $\frac{P_m}{\gamma}<\frac{2}{\pi}$ and let $c:=V_p(\frac{\pi}{2})$. Then the solutions of the SMIB system described by the swing equation \eqref{swingib} starting from any initial condition in the set $\Omega = \{(\delta,\w) \in [-\pi,\pi]\times\R:\,V(\delta,\w)\leq c\}$ converge asymptotically to the equilibrium $(\delta,\w)=(\mathrm{arcsin} \frac{P_m}{\gamma},\w^*)$.
\end{corollary}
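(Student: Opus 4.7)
The plan is to mirror the proof of Theorem \ref{ThImproved}, exploiting the fact that the swing equation \eqref{swingib} differs from the improved swing equation \eqref{swing4} only by the absence of the factor $\w$ multiplying the inertia and damping terms. This simplification is substantial: here the damping enters $\dot V$ linearly, so no lower bound on $\w$ is needed, which is precisely why the corollary retains assumption \eqref{c:ib2} but drops \eqref{c:ib1}.

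First I would verify that $V$ serves as a proper Lyapunov candidate around $(\bar\delta,\w^*)$. The kinetic part $V_k$ has a strict minimum at $\w=\w^*$. For the potential part, note that $V_p'(\delta)=\gamma(\sin\delta-\sin\bar\delta)$ vanishes at $\bar\delta$ and $V_p''(\bar\delta)=\gamma\cos\bar\delta>0$ since $P_m/\gamma<1$; in fact $V_p$ is convex on the whole set $[-\pi/2,\pi/2]$. To conclude that $V$ is positive on $\Omega\setminus\{(\bar\delta,\w^*)\}$ and that the sublevel set confines $\delta$ to this convex region, I would recycle verbatim the numerical argument from Theorem \ref{ThImproved}: the constraint $V_p(\delta)\le c=V_p(\pi/2)$ combined with $P_m/\gamma<2/\pi$ produces an interval $[\delta^-,\pi/2]\subset(-\pi/2,\pi/2)$ on which $V_p$ is convex and positive away from $\bar\delta$ (as illustrated in Fig. \ref{fig:numerical}).

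Next I would compute $\dot V$ along \eqref{swingib}. Using $\dot\delta=\w-\w^*$, the second equation of \eqref{swingib}, and the equilibrium identity $P_m=\gamma\sin\bar\delta$, the cross terms involving $\sin\delta$ cancel and a short calculation yields the clean dissipation
\begin{equation*}
\dot V = -A(\w-\w^*)^2 \le 0 .
\end{equation*}
Hence $\Omega$ is forward invariant and every trajectory starting in $\Omega$ is bounded. Note that, in contrast to Theorem \ref{ThImproved}, there is no auxiliary set analogous to $\Omega_k$ to contend with, since the damping contribution is already sign-definite without any constraint on $\w$.

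Finally I would apply LaSalle's invariance principle. On the largest invariant subset of $\{\dot V=0\}\cap\Omega$ we have $\w\equiv\w^*$, hence $\dot\w\equiv 0$, which through \eqref{swingib} forces $\gamma\sin\delta=P_m$; since $\Omega$ confines $\delta$ to the interval $[\delta^-,\pi/2]$ on which the arcsine is single-valued, necessarily $\delta=\bar\delta$, giving asymptotic convergence to $(\bar\delta,\w^*)$. I do not anticipate any real obstacle: the only non-routine step is the characterization of $\Omega$ as a sublevel set contained in the convex region of $V_p$, and that step is already handled by the argument imported from Theorem \ref{ThImproved}.
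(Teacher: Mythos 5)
Your proposal is correct and follows essentially the same route as the paper, which simply notes that the proof mirrors that of Theorem \ref{ThImproved} with the dissipation reducing to $\dot V=-A(\w-\w^*)^2$, so that the sublevel-set condition $V\leq V_p(\frac{\pi}{2})$ alone suffices and assumption \eqref{c:ib1} is not needed. Your filled-in details (the convexity/interval argument for $V_p$ imported from Theorem \ref{ThImproved}, the cancellation of the $\sin\delta$ cross terms, and the LaSalle step) are exactly what the paper's omitted proof intends.
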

\begin{proof}
The proof is similar to the proof of the Theorem \ref{ThImproved} and therefore omitted. Note that here, $\dot{V}=-A(\w-\w^*)^2$ and hence here the condition $V<V_p(\frac{\pi}{2})$ suffices for proving the stability.
\end{proof}
Corollary \ref{co:swing} and Theorem \ref{ThImproved} characterize similar region of attraction estimates for both models (improved swing and the conventional one) if $c_k\geq c_p$ (and as a result $c=c_p$).
\section{Simulation}
In this section we provide simulations verifying our results. Examples depict the mismatch between the behavior suggested by the swing equation and the improved version. In all simulations, the parameters are set as follows: $M=0.2$, $A=0.04$, $\w^*=(2\pi)60$, and $\gamma=2$. Note that $J=\frac{M}{w^*}$ and $D_d=\frac{A}{\w^*}$.
\subsection{Constant Load: Example 1}
As described in Section III, the steady state value of the frequency differs for the swing equation and the improved swing model under similar constant loads. Figure \ref{fig:0}(a) illustrates this issue. In this example, we set $P_m=1\pu$ and  $P_e=2\pu$, and start from the initial condition $f(0)=60\mathrm{Hz}$. The steady state value of the frequency is $56.02\mathrm{Hz}$ for the swing equation and $55.72\mathrm{Hz}$ for the improved swing model.
\subsection{Constant Load: Example 2}
An estimate of the region of attraction for the improved swing equation is provided in Section III ($\Omega_s$ in Theorem \ref{th:stability}). Figure \ref{fig:0}(b) illustrates a solution that initiates just out of the domain and becomes unstable. However, the conventional swing equation is falsely depicting that the system remains stable. Here, we set $P_m=1 \pu$ and $P_e=4.65\pu$, and start from $f(0)=24\mathrm{Hz}$. According to \eqref{bounded}, the estimate of region of attraction allows for $f>25\mathrm{Hz}$.
\subsection{Constant Load: Example 3}
According to the analysis in Section III, condition \eqref{delta} should hold for the improved swing equation, so that there exist a steady state frequency value. Figure \ref{fig:0}(c) shows that the system becomes unstable if the inequality \eqref{delta} is violated. In this example we adjust $P_m=1\pu$ and $P_e=4.90\pu$ such that $\Delta$ as defined in \eqref{e:Delta} possesses a negative value.
\subsection{SMIB: Different Behavior}
The behavior of both systems, conventional and improved, are similar when connected to an infinite bus. However still with some specific initial conditions, the systems act quite differently. Figure \ref{fig:0}(d) illustrates an example of this different behavior. 
\subsection{SMIB: Region of Attraction}
Figure \ref{fig:roa} illustrates the phase portrait of the system \eqref{swing4} and the Lyapunov function $V(\delta,\w)$ level sets. It is verified that our estimate of the domain of attraction is not very conservative. Note that there are solutions outside the estimate of region of attraction that still converge, however the solutions further away from the estimate of domain of attraction diverge from the equilibrium.
\begin{figure}
	\centering
	\includegraphics[width=8.7cm]{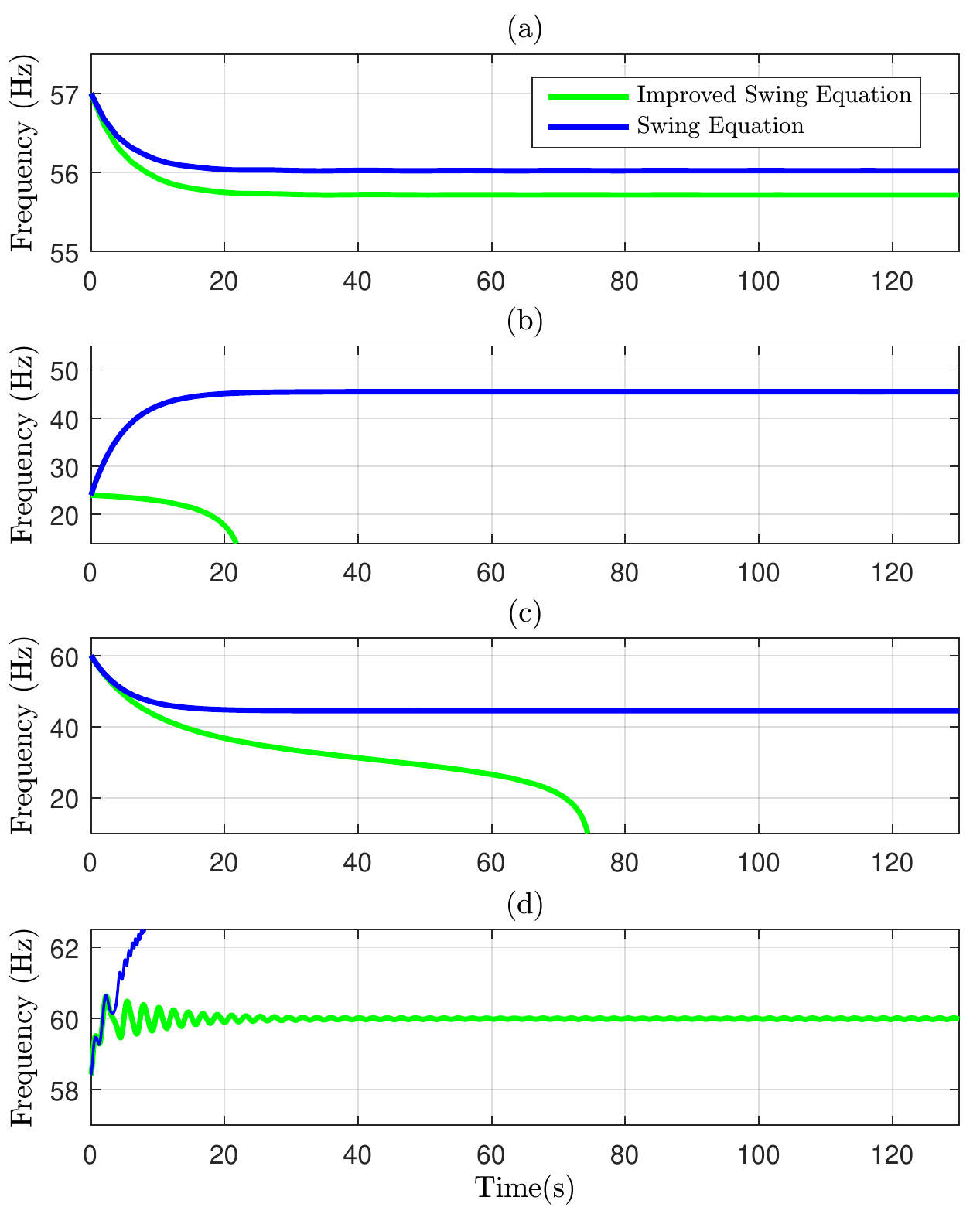}
	\caption{Simulation results for the case with constant load (a-c) and SMIB (d) show different behavior of the two models: (a) The models reach different steady-state frequency value while bearing the same load.  (b) A solution of the improved swing equation starting from outside of the region of attraction estimate, diverges from the equilibrium. Nevertheless, the solution of the conventional swing equation converges.  (c) The frequency in the improved swing equation is not stable when condition \eqref{delta} does not hold. The stability of the swing equation does not necessitate such a condition.  (d) While the conventional swing SMIB model \eqref{swingib} suggests divergence from the equilibrium, the frequency of the improved model converges to the desired value. Note that here, the initial condition is outside the region of attraction estimate for both models.}
	\label{fig:0}
\end{figure} 
\begin{figure}
	\centering
	\includegraphics[width=8.7cm]{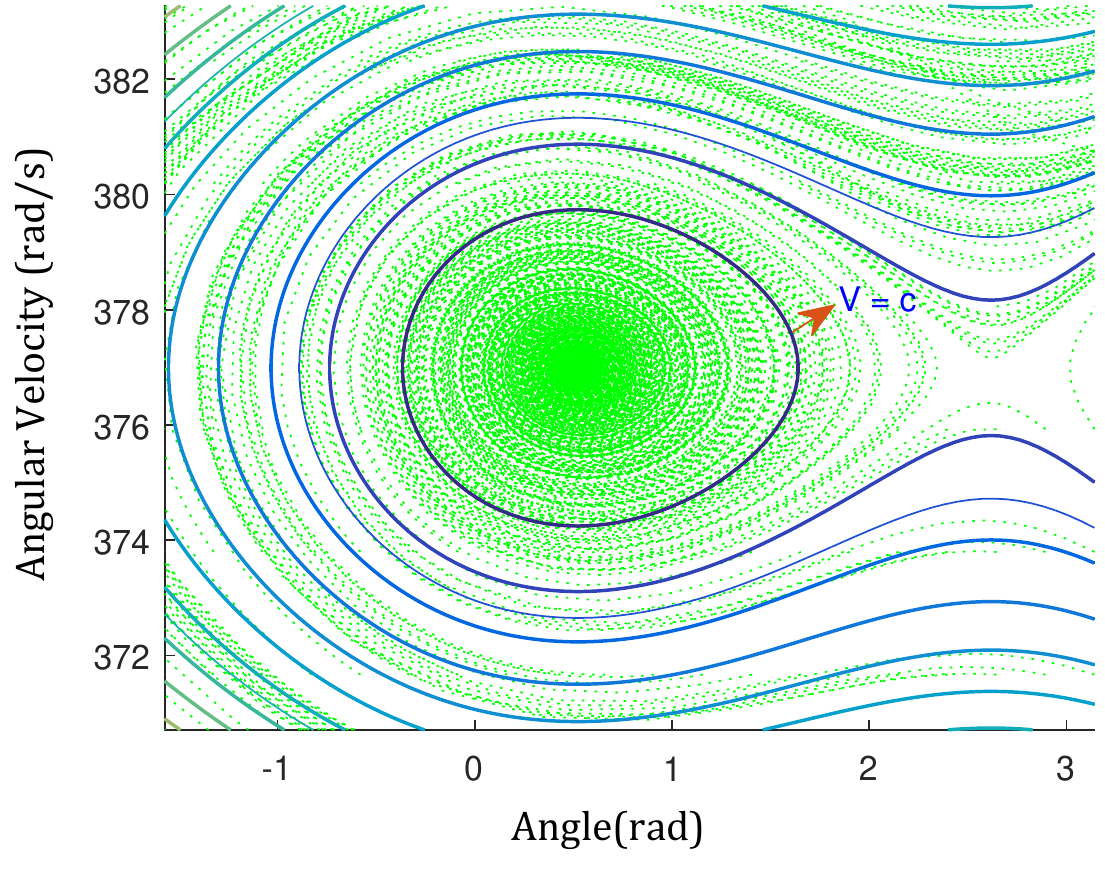}
	\caption{The estimate of the domain of attraction for system \eqref{swing4} (SIMB) is verified. All solutions (dotted green) within the Lyapunov level set $\{V(\delta, \w)=c\}$ (dark blue) converge to the equilibrium $(\delta,\w)=\big(\frac{\pi}{6},(2\pi)60\big)$. Note that here $c_p<c_k$, and hence $c=c_p=V_p(\frac{\pi}{2})$.}
	\label{fig:roa}
\end{figure}
\section{Conclusion and Future Work}
We have investigated the properties of an improved swing equation without relying on linearization. Modeling the synchronous generator by this equation, two scenarios are considered in this paper. First, the stability of a single generator connected to a constant load is proved and frequency regulation through a proposed controller is achieved. In the second scenario, the synchronous machine is connected to an infinite bus. As a contribution with respect to \cite{Zhou2009}, where similar dynamics are investigated through linearization, here a nonlinear Lyapunov analysis is provided to prove stability and frequency regulation. Finally, simulations are carried out to show that the swing equation model gives rise to a behavior that does not match what is suggested by the improved swing equation. Future works include considering voltage dynamics and multi-machine systems.
\bibliographystyle{ieeetran}
\bibliography{ref}
\IEEEpeerreviewmaketitle
\end{document}